\newtheorem{prop}{Proposition}
\newtheorem{defn}{Definition}
\newtheorem{lem}{Lemma}
\newtheorem{thm}{Theorem}
 \newcommand{\be}{\begin{equation}} \newcommand{\ee}{\end{equation}}
\newcommand{\bea}{\begin{eqnarray}} \newcommand{\eea}{\end{eqnarray}}
\newcommand{\bse}{\begin{subequations}} \newcommand{\ese}{\end{subequations}}
\begin{document}
\title{Gravitational collapse of generalised Vaidya spacetime} 
\author{Maombi D. Mkenyeleye} 
\email{mkenyeleye@yahoo.co.uk\\Permanent Address:School of Mathematical Sciences,\\ University of Dodoma, Tanzania.}
\affiliation{Astrophysics and Cosmology Research Unit, School of Mathematics, Statistics and Computer Science, University of KwaZulu-Natal, Private Bag X54001, Durban 4000, South Africa.}
\author{Rituparno Goswami}
\email{Goswami@ukzn.ac.za}
\affiliation{Astrophysics and Cosmology Research Unit, School of Mathematics, Statistics and Computer Science, University of KwaZulu-Natal, Private Bag X54001, Durban 4000, South Africa.}
\author{Sunil D. Maharaj}
\email{Maharaj@ukzn.ac.za}
\affiliation{Astrophysics and Cosmology Research Unit, School of Mathematics, Statistics and Computer Science, University of KwaZulu-Natal, Private Bag X54001, Durban 4000, South Africa.}

\begin{abstract}
We  study the gravitational collapse of a generalised Vaidya spacetime in the context of the Cosmic Censorship hypothesis. We develop a general mathematical framework to study the conditions on the mass function so that future directed non-spacelike geodesics can terminate at the singularity in the past. Thus our result generalises earlier works on gravitational collapse of  the combinations of Type-I and Type-II matter fields. Our analysis shows transparently that there exist classes of generalised Vaidya mass functions for which the collapse terminates with a locally naked central singularity. We calculate the strength of the these singularities to show that they are strong curvature singularities and there can be no extension of spacetime through them.
\end{abstract}

\pacs{04.20.Cv	, 04.20.Dw}

\maketitle

\section{Introduction\label{Intro}}
The Vaidya spacetime \cite{Vaidya}, also known as the radiating Schwarzschild spacetime, describes the geometry outside a radiating spherically symmetric star. The radiation effects are important in the later stages of gravitational collapse of a star, when considerable amount of energy in form of photons or neutrinos is ejected from the star. This makes the collapsing star 
to be surrounded by an ever expanding zone of radiation. If we treat the complete non-static configuration of the radiating star and the zone of radiation as an isolated system within an asymptotically flat universe, then beyond the expanding zone of radiation the spacetime may be described by the Schwarszchild solution. The Vaidya solution is of Petrov type D and possesses a normal shear-free null congruence with non-zero expansion. In terms of exploding (imploding) null coordinates the metric is given as 
\begin{equation}
ds^2 = -\left[1-\frac{2m(v)}{r}\right ]dv^2+2\epsilon dvdr +r^2d\Omega^2,  \label{vaidya}
\end{equation}
where $\epsilon=\pm 1$ describes incoming (outgoing) radiation shells respectively, the function `$m(v)$' is the mass function and $d\Omega^2$ describes the line element on the 2-sphere. 

One of the earliest counterexamples of  Cosmic Censorship Conjecture (CCC), with a reasonable matter field satisfying physically reasonable energy conditions, was found in the shell focussing singularity formed by imploding shells of radiation in the Vaidya-Papapetrou model \cite{Papa,Dwivedi_1989}. In this case, radially injected radiation flows into an initially flat and empty region, and is focussed into a central singularity of growing mass. 
It was shown that the central singularity at ($v=0,r=0$) becomes a node with definite tangent for families of non-spacelike geodesics, for a non-zero measure of parameters in the model. Hence the singularity at ($v=0,r=0$) is naked in the sense that families of future directed non-spacelike geodesics going to future null infinity terminate at the singularity 
in the past. For a detailed discussion on the censorship violation in radiation collapse we refer to \cite{Joshi_1993}.

The generalisation of the Vaidya solution, also known as the generalised Vaidya spacetime, that includes all the known solutions of Einstein's field equations with combination of Type-I and Type-II matter fields, was given by Wang and Yu  \cite{Wang}. This generalisation comes from the observation that the energy momentum tensor for these matter fields are linear in terms of the mass function. As a result, the linear superposition of particular solutions is also a solution to the field equations. Hence, by superposition we can explicitly construct solutions such as 
the monopole-de Sitter-charged Vaidya solution and the Husain solution. Generalised Vaidya spacetimes are also widely used in describing the formation of regular black holes \cite{sean}, dynamical black holes \cite{Ghosh_2004} and black holes with closed trapped regions \cite{frolov}. The generalised Vaidya model can be matched to a heat conducting interior of a radiating star as recently shown by \cite{Maharaj}. Also, generalised Vaidya spacetime emerges naturally while solving many other astrophysical and cosmological scenarios \cite{Reza, Sungwook}.

The main goal of this paper is to study the gravitational collapse of generalised Vaidya spacetimes in the context of CCC. We develop a general mathematical framework to study the conditions on the mass function so that future directed non-spacelike geodesics can terminate at the singularity in the past. Thus our result generalises the earlier works on gravitational collapse of Type-II matter fields and also shows transparently that there exist classes of generalised Vaidya mass function (of non-zero measure) for which the collapse terminates with a locally naked singularity. We also calculate the strength of the naked singularities to show that they are strong curvature singularities and there is no extension of spacetime through these singularities.

The paper is organised as follows: In the next section we discuss the generalised Vaidya solution. In section 3 we consider a type II matter field undergoing gravitational collapse to a spacetime singlarity. In section 4 we analyse the central singularity (at $v=0,r=0$) to find out the conditions on the mass function such that the singularity is a node with a definite tangent for non-spacelike geodesics. In section 5 we discuss the strength of the naked central singularity. Finally in the last section we apply our result to some well known collapse models to recover the conditions for CCC violation.

Unless otherwise specified, we use natural units ($c=8\pi G=1$) throughout this paper, Latin indices run from 0 to 3. 
The symbol $\nabla$ represents the usual covariant derivative and $\partial$ corresponds to partial differentiation. 
We use the $(-,+,+,+)$ signature and the Riemann tensor is defined by
\begin{equation}
R^{a}{}_{bcd}=\Gamma^a{}_{bd,c}-\Gamma^a{}_{bc,d}+ \Gamma^e{}_{bd}\Gamma^a{}_{ce}-\Gamma^e{}_{bc}\Gamma^a{}_{de}\;,
\end{equation}
where the $\Gamma^a{}_{bd}$ are the Christoffel symbols (i.e. symmetric in the lower indices) defined by
\begin{equation}
\Gamma^{a}{}_{bd}=\frac{1}{2}g^{ae}
\left(g_{be,d}+g_{ed,b}-g_{bd,e}\right)\;.
\end{equation}
The Ricci tensor is obtained by contracting the {\em first} and the {\em third} indices
\begin{equation}\label{Ricci}
R_{ab}=g^{cd}R_{cadb}\;.
\end{equation}
 The Hilbert--Einstein action in the presence of matter is given by
\begin{equation}
{\cal S}=\frac12\int d^4x \sqrt{-g}\left[R-2\Lambda-2{\cal L}_m \right]\;,
\end{equation}
variation of which gives the Einstein field equations 
\be
G_{ab}+\Lambda g_{ab}=T_{ab}\;.
\ee

\section{Generalised Vaidya Spacetime \label{one}}

We know that the most general spherically symmetric line element for an arbitrary combination of Type-I matter fields (whose energy momentum tensor has one timelike and three spacelike eigenvectors) and Type-II matter fields (whose energy momentum tensor has double null eigenvectors) is given by \cite{israel}:
\begin{eqnarray}
ds^2& = &-e^{2\psi (v,r)}\left[1-\frac{2m(v,r)}{r}\right ]dv^2+2\epsilon e^{\psi (v,r)}dvdr \nonumber\\
&&+r^2(d\theta^2+\sin^2\theta d\phi ^2), \:(\epsilon=\pm 1).\label{generalized-vaidya}
\end{eqnarray}
Here $m(v,r)$ is the mass function related to the gravitational energy within a given radius $r$ \cite{Lake}. When $\epsilon=+1$ , the null coordinate $v$ represents the Eddington advanced time, where $r$ is decreasing towards the future along a ray $v=Const.$ and depicts ingoing null congruence while $\epsilon=-1$ depicts an outgoing null congruence.

The specific combination of matter fields that makes $\psi (v,r)= 0$ gives the generalised Vaidya geometry. In this paper, as we are considering a collapse scenario, we take $\epsilon =+1$. Particularly, we consider a line element of the form
\begin{equation}
ds^2 = -\left(1-\frac{2m(v,r)}{r}\right )dv^2+2dvdr +r^2d\Omega^2. \label{line-element}
\end{equation}
Using the following definitions
\begin{equation}
\dot{m}(v,r)\equiv \frac{\partial m(v,r)}{\partial v},  \: \quad m'(v,r)\equiv  \frac{\partial m(v,r)}{\partial r},
\end{equation}
the non-vanishing components of the Ricci tensor can be written as
\begin{subequations}\label{Ricci}
\begin{eqnarray}
  R^v_v &=& R^r_r=\frac{m''(v,r)}{r}\label{Ricci1} \,,\\
 R^{\theta}_{\theta} &=& R^{\phi}_{\phi}=\frac{2m'(v,r)}{r^2}\,,\label{Ricci2}
\end{eqnarray}
\end{subequations}\label{Einstein}
while the Ricci scalar is given by
\begin{equation}\label{scalarcurvature}
    R = \frac{2m''(v,r)}{r} + \frac{4m'(v,r)}{r^2}.
\end{equation}
The non-vanishing components of the Einstein tensor can be written as
\begin{subequations}
\begin{eqnarray}
G^v_v &=& G^r_r=-\frac{2m'(v,r)}{r^2},\label{Einstein1}\\
G^r_v &=& \frac{2\dot{m} (v,r)}{r^2},\label{Einstein2}\\
G^{\theta}_{\theta} &=& G^{\phi}_{\phi}=-\frac{m''(v,r)}{r}\;.\label{Einstein3}
\end{eqnarray}
\end{subequations}
Using the Einstein field equations, the corresponding energy momentum tensor can be written in the form \cite{Husian_1996, Wang}
\begin{equation}
T_{\mu\nu}=T^{(n)}_{\mu\nu}+T^{(m)}_{\mu\nu}, \label{EMT}
\end{equation}
where
\begin{subequations}
\begin{eqnarray}\label{EMT2}
T^{(n)}_{\mu\nu}&=&\mu l_{\mu}l_{\nu},\\
T^{(m)}_{\mu\nu}&=&(\rho+\varrho)(l_{\mu}n_{\nu}+l_{\nu}n_{\mu} )+ \varrho g_{\mu\nu},
\end{eqnarray}
\end{subequations}
and
\begin{eqnarray}
\mu = \frac{2\dot{m}(v,r)}{r^2}, \quad \rho=\frac{2m'(v,r)}{r^2}, \quad \varrho = -\frac{m''(v,r)}{r}.
\end{eqnarray}
In the above $l_{\mu}$ and $n_{\mu}$ are two null vectors
\begin{eqnarray}
l_{\mu} =\delta ^0_{\mu}, \quad  n_{\mu}=\frac{1}{2}\left [1-\frac{2m(v,r)}{r}\right]\delta ^0_{\mu}-\delta^1_{\mu},
\end{eqnarray}
where $l_{\mu}l^{\mu} =n_{\mu}n^{\mu}=0$ and $l_{\mu}n^{\mu}=-1$.

Equation \eqref{EMT} can be considered as the energy momentum tensor of the generalised Vaidya solution, with the component $T^{(n)}_{\mu\nu}$ is the matter field that moves along the null hypersurfaces $v = Const.,$ while $T^{(m)}_{\mu\nu}$ describes the matter moving along timelike trajectories. When $\rho = \varrho = 0$, the solution reduces to the Vaidya solution with $m = m(v)$.

If the energy momentum tensor of Equation \eqref{EMT} is projected to the orthonormal basis, defined by the four vectors,
\begin{eqnarray}\label{orthonormal}
    E^{\mu}_{(0)} = \frac{l_{\mu} + n_{\mu}}{\sqrt{2}}, &&\quad E^{\mu}_{(1)} = \frac{l_{\mu} - n_{\mu}}{\sqrt{2}},\nonumber \\
     \quad E^{\mu}_{(2)} = \frac{1}{r}\delta^{\mu}_2, &&\quad E^{\mu}_{(3)} = \frac{1}{r\sin\theta}\delta^{\mu}_3,
\end{eqnarray}
it can be shown that \cite{Wang}
\begin{equation}\label{EMTMatrix}
    \left[T_{(\mu)(\nu)}\right] = \left[
                                \begin{array}{cccc}
                                  \frac{\mu}{2}+ \rho & \frac{\mu}{2} & 0 & 0 \\
                                  \frac{\mu}{2} & \frac{\mu}{2} - \rho & 0 & 0 \\
                                  0 & 0 & \varrho & 0 \\
                                  0 & 0 & 0 & \varrho \\
                                \end{array}
                              \right].
\end{equation}
This form of the energy momentum is a combination of Type-I and Type-II fluids as defined in \cite{Ellis_1973}, with the following energy conditions
\begin{enumerate}[a)]
  \item \emph{The weak and strong energy conditions}
  \begin{equation}\label{strongweakenergy}
   \mu\geq 0, \quad \rho\geq 0, \quad \varrho \geq 0, \quad (\mu\neq 0).
  \end{equation}
  \item \emph{The dominant energy conditions}
  \begin{equation}\label{dominantenergy}
    \mu\geq 0, \quad \rho \geq \varrho \geq 0, \quad (\mu\neq 0).
  \end{equation}
\end{enumerate}
These energy conditions can be satisfied by choosing the mass function $m(v,r)$ suitably. In particular, when
$m = m(v)$, all the energy conditions (weak, strong, and dominant) reduce to $\mu \geq 0$, while when $m = m(r)$ we have $\mu = 0$, and the matter field degenerates to a Type-I fluid with the usual energy conditions \cite{Ellis_1973} .

\section{Collapsing Model \label{two}}
In this section, we examine the gravitational collapse of  imploding radiation and matter described by the generalised Vaidya spacetime. In this situation, a thick shell of radiation and Type-I matter collapses at the centre of symmetry \cite{Joshi_1993}.

If $K^{\mu}$ is the tangent to non-spacelike geodesics with $K^{\mu} = \frac{dx^{\mu}}{dk}$, where $k$ is the affine parameter, then $K^{\mu}_{;\nu}K^{\nu}=0$ and
\begin{equation}
g_{\mu\nu}K^{\mu}K^{\nu} =\beta, \label{tangents}
\end{equation}
where $\beta$ is a constant that characterises different classes of geodesics. $\beta=0$ describes  null geodesics, while $\beta < 0$ applies to timelike geodesics.  
The equations  for $dK^v/dk$  and $dK^r/dk$ are calculated from the Euler-Lagrange equations 
\begin{equation}
\frac{\partial L}{\partial x^\alpha}- \frac{d}{dk}\left(\frac{\partial L}{\partial \dot{x^\alpha} }\right)= 0, 
 \end{equation}
with the Lagrangian 
\begin{equation}
L = \frac{1}{2}g_{\mu\nu}\dot{x}^{\mu}\dot{x}^{\nu}, 
 \end{equation}
where the dot is a derivative with respect to the affine parameter $k$. These equations are given by
\begin{widetext}
\begin{subequations}\label{tangents}
\begin{eqnarray}
\label{tangenteqn1}\frac{dK^v}{dk}+\left(\frac{m(v,r)}{r^2}-\frac{m'(v,r)}{r}\right)\left(K^v\right)^2 -\frac{\ell^2}{r^3} &=& 0,\quad \quad \quad\\
\frac{dK^r}{dk}+\frac{\dot{m}(v,r) }{r}\left(K^v\right)^2 - \frac{\ell^2}{r^3}\left(1 - \frac{2m(v,r)}{r}\right) - \beta\left(\frac{m(v,r)}{r^2}- \frac{m'(v,r)}{r}\right) &=& 0. \label{tangenteqn2}
\end{eqnarray}
\end{subequations}
\end{widetext}
The components $K^{\theta}$ and $K^{\phi}$ of the tangent vector are given by\cite{Joshi_1993}
\begin{subequations}\label{thetaphitangents}
\begin{eqnarray}
K^{\theta}&=&\frac{\ell \cos\varphi}{r^2\sin^2\theta},\label{thetatangent}\\
K^{\phi}&=& \frac{\ell \sin\varphi \cos\phi }{r^2}\label{phitangent},
\end{eqnarray}
\end{subequations}
where $\ell $ is the impact parameter and $\varphi$ is the isotropy parameter defined by  the relation $\sin\phi \tan\varphi = \cot\theta$.

If we follow \cite{Dwivedi_1989}  and write $K^v$ as
\begin{equation}\label{kvcomponent}
K^v=\frac{P}{r},
\end{equation}
where $P=P(v,r)$ is an arbitrary function, 
then $g_{\mu\nu}K^{\mu}K^{\nu} =\beta$ gives
\begin{eqnarray}
K^r &=& \frac{P}{2r}\left[1 - \frac{2m(v,r)}{r}\right] - \frac{\ell^2}{2rP}+ \frac{\beta r}{2P}\;.\label{rcomponent}
\end{eqnarray}
From Equation \eqref{kvcomponent}, we have
\begin{equation}\label{dKV}
\frac{dK^v}{dk} = \frac{d}{dk}\left(\frac{P}{r}\right) = \frac{1}{r}\frac{dP}{dk} - \frac{P}{r^2}\frac{dr}{dk}.
\end{equation}
Thus
\begin{equation}\label{none}
\frac{dP}{dk} = \frac{1}{r}\left(r^2\frac{dK^v}{dk} + P\frac{dr}{dk}\right).
\end{equation}
Substituting Equations \eqref{tangenteqn1} and \eqref{rcomponent} into Equation \eqref{none} gives the differential equation satisfied by the function $P$:
\begin{equation}
\frac{dP}{dk} = \frac{P^2}{2r^2}\left(1 - \frac{4m(v,r)}{r} + 2m'(v,r)\right) + \frac{\ell^2}{2r^2} + \frac{\beta}{2}\;.
\label{Pdifferential}
\end{equation}
The function $P(v,r)$ can be found if the mass function $m(v,r)$ and the initial conditions are defined (See for example, \cite{Dwivedi_1989}).

\section{Conditions for Locally Naked Singularity}\label{three}

In this section we examine, given the generalised Vaidya mass function, how the final fate of collapse
is determined in terms of either a black hole or a naked singularity. If there are families of future directed non-spacelike
trajectories reaching faraway observers in spacetime, which terminate in the past at the singularity, then we have a
naked singularity forming as the collapse final state. Otherwise when no such families exist and event
horizon forms sufficiently early to cover the singularity, we have a black hole.
The equation for the radial null geodesics $(\ell = 0, \beta = 0)$ for the line element \eqref{line-element} can be easily found, using equations \eqref{kvcomponent} and \eqref{rcomponent}, which is given by
\begin{equation}\label{nullgeodesics}
    \frac{dv}{dr} = \frac{2r}{r - 2m(v,r)}.
\end{equation}
The above differential equation has a singularity at $r = 0, \quad v = 0$. The nature of this singularity can be analysed by the usual techniques of the theory of ODE's \cite{Tricomi_1961,Perko_1991}. Whereas the procedures used below are standard, we shall describe the case treated here in some detail
so as to give the exact picture of the nature of the central singularity at $r = 0, \quad v = 0$.

\subsection{Structure of the central singularity}

We can generally write Equation \eqref{nullgeodesics} in the form
 \begin{equation}\label{Abridgeneral}
 \frac{dv}{dr} = \frac{M(v,r)}{N(v,r)},
  \end{equation}
 with the singular point at $r = v = 0$, where both the functions $M(v,r)$ and $N(v,r)$ vanishe. Hence we should carefully analyse the existence and uniqueness of the solution of the above differential equation in the vicinity of this singularity. At this point it is useful to introduce a new independent variable $t$ with differential $dt$ such that
\begin{equation}\label{dtdifferential1}
    \frac{dv}{M(v,r)} = \frac{dr}{N(v,r)} = dt,
\end{equation}
so that the  differential equation \eqref{Abridgeneral} can be replaced by a system
\begin{eqnarray}\label{system}
 \nonumber   \frac{dv(t)}{dt} &=& M(v,r) \\
              \frac{dr(t)}{dt} &=& N(v,r)\;.
\end{eqnarray}
We would like to emphasise here that all the solutions of equation (\ref{Abridgeneral}) is a solution of the system (\ref{system}) and hence we study the behaviour of this system of equations near the singular point $r = v = 0$ in the ($r,v$) plane.
We can easily see that the singular point of (\ref{Abridgeneral}) is a fixed point of the system (\ref{system}). To find the necessary and sufficient conditions for existence of the solutions of this system in the vicinity of the fixed point $r = v = 0$, let us write (\ref{system}) as a differential equation of the vector $\bm{x}(t)=[v(t),r(t)]^T$ on  $\mathbb{R}^2$ as
\begin{equation}
\frac{d{\bm{x}(t)}}{dt}=\bm{f}(\bm{x}(t))
\label{vec_sys}
\end{equation}
Now to show the existence and uniqueness of the solution with respect to the initial conditions arbitrarily near the fixed point of the above system (since the initial conditions on the fixed point will imply the system stays on the fixed point) we give the following definitions:
\begin{defn}
The function $\bm{f}:\mathbb{R}^2\rightarrow\mathbb{R}^2$ is differentiable at $\bm{x}=\bm{x}_0$, if the partial derivatives of the functions $M$ and $N$ with respect to $r$ and $v$ exist at that point. The derivative of the function, $\bm{Df}$, is given by the $2\times2$ Jacobian matrix\\
\begin{displaymath}
 \left[
                                \begin{array}{cc}
                                  M_{, v}& M_{,r} \\
                                  N_{,v} & N_{,r} \\
                                \end{array}
                              \right].
\end{displaymath} 
\end{defn}
\begin{defn}
Suppose $U$ is an open subset of  $\mathbb{R}^2$, then $\bm{f}:U\rightarrow\mathbb{R}^2$ is of class $C^1$ iff the partial derivatives $M_{, v}, M_{,r} , N_{,v} ,N_{,r} $ exist and are continuous on $U$.
\end{defn}
Henceforth we will consider the function $\bm{f}$ to be of class $C^1$ throughout the spacetime. Let us now show that there exists a unique solution to the system (\ref{vec_sys}) subject to the initial condition $\bm{x}(t_0)=\bm{x}_0$, where $\bm{x}_0$ is arbitrarily near the fixed point of the equation. Let us define an operator $T$ in the following way:
\begin{defn}
Let $T:\mathbb{R}^2\rightarrow\mathbb{R}^2$ be an operator acting on all continuous and differentiable vectors $\bm{y}(t)$ on $\mathbb{R}^2$ and takes them to the image $T\bm{y}(t)$ defined as
\begin{displaymath}
T\bm{y}(t)=\bm{x}_0+\int_{t_0}^t \bm{f}(\bm{y}(s))ds\;.
\end{displaymath} 
\end{defn}
We now prove an important property of this operator $T$, subject to the function $\bm{f}$ being class $C^1$,
\begin{lem}
Let $U\ni \bm{x}_0$ be an open subset of  $\mathbb{R}^2$ and $\bm{f}:U\rightarrow\mathbb{R}^2$ is of class $C^1$ and $\bm{y}(t)$, $\bm{z}(t)$ are continuous and differentiable vectors on $U$. Then there always exists an $\epsilon$-neighbourhood $B_\epsilon(\bm{x}_0)$ of $\bm{x}_0$ in which $|T\bm{y}(t)-T\bm{z}(t)|\le \kappa |\bm{y}(t)-\bm{z}(t)|$ where 
$0\le\kappa\le 1$. In other words $T$ is an contraction mapping on $B_\epsilon(\bm{x}_0) $.
\end{lem}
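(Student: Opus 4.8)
The plan is to leverage the $C^1$ hypothesis to produce a local Lipschitz estimate on $\bm{f}$ and then push this estimate through the integral defining $T$. First I would use continuity of the Jacobian: since $\bm{f}$ is of class $C^1$ on the open set $U\ni\bm{x}_0$, the matrix $\bm{Df}$ is continuous on $U$, so I may choose a closed ball $\bar{B}_\epsilon(\bm{x}_0)\subset U$ small enough that $\bm{Df}$ is bounded there. Setting $L=\sup_{\bm{x}\in\bar{B}_\epsilon(\bm{x}_0)}\|\bm{Df}(\bm{x})\|$, this supremum is finite because $\bm{Df}$ is continuous on the compact set $\bar{B}_\epsilon(\bm{x}_0)$.

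The second step converts the bound on $\bm{Df}$ into a Lipschitz condition on $\bm{f}$. Since the ball $\bar{B}_\epsilon(\bm{x}_0)$ is convex, the straight segment joining any two points $\bm{y},\bm{z}$ in it stays inside the ball, and the mean value inequality for vector-valued maps gives
\begin{equation}
|\bm{f}(\bm{y})-\bm{f}(\bm{z})|\le L\,|\bm{y}-\bm{z}|,
\end{equation}
so $\bm{f}$ is Lipschitz with constant $L$ on $\bar{B}_\epsilon(\bm{x}_0)$.

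The third step estimates the difference of the images directly from the definition of $T$. For curves $\bm{y}(t),\bm{z}(t)$ taking values in $\bar{B}_\epsilon(\bm{x}_0)$ I would write
\begin{equation}
|T\bm{y}(t)-T\bm{z}(t)|=\left|\int_{t_0}^t\bigl[\bm{f}(\bm{y}(s))-\bm{f}(\bm{z}(s))\bigr]\,ds\right|\le L\int_{t_0}^t|\bm{y}(s)-\bm{z}(s)|\,ds,
\end{equation}
using the Lipschitz bound inside the integral. Taking the supremum over $s$ and writing $|\bm{y}-\bm{z}|=\sup_s|\bm{y}(s)-\bm{z}(s)|$ yields $|T\bm{y}(t)-T\bm{z}(t)|\le L\,|t-t_0|\,|\bm{y}-\bm{z}|$. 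The contraction property then follows by shortening the parameter interval: choosing $|t-t_0|\le\delta$ with $\delta$ small enough that $\kappa\equiv L\delta<1$ gives exactly $|T\bm{y}(t)-T\bm{z}(t)|\le\kappa\,|\bm{y}-\bm{z}|$ with $0\le\kappa<1$, which is the claim.

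I expect the main obstacle to be conceptual rather than computational: one must be careful that the inequality is really a contraction in the supremum norm on a space of curves whose values remain confined to $B_\epsilon(\bm{x}_0)$. The neighbourhood has to be chosen simultaneously small in space, so that $\bm{Df}$ is bounded and the connecting segment stays in $U$, and short in the parameter $t$, so that $L\delta<1$. One should also verify that $T$ actually carries such confined curves back into themselves, since this closure is what legitimises the subsequent fixed-point argument; ensuring it may force a further shrinking of $\delta$.
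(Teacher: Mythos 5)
Your proposal is correct and takes essentially the same route as the paper: bound the Jacobian norm on a small ball around $\bm{x}_0$, convert that into a Lipschitz estimate on $\bm{f}$ (the paper does this via a line integral of $\bm{Df}$ along the segment, you via the mean value inequality), push the estimate through the integral defining $T$, and shrink the parameter interval so the resulting constant is a contraction factor. If anything, your version is slightly more careful than the paper's, which states the estimate pointwise in $t$ rather than in the supremum norm over curves, allows $\kappa\le 1$ rather than requiring $\kappa<1$, and does not address whether $T$ maps the confined curves back into themselves.
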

\begin{proof}
Let $K_0={\rm max}_{|\bm{x}-\bm{x}_0|\le\epsilon}||\bm{Df}(\bm{x})||$. Then we have 
\be
|T\bm{y}(t)-T\bm{z}(t)|=\left|\int_{t_0}^t \left(\bm{f}(\bm{y}(s))-\bm{f}(\bm{z}(s))\right)ds\right |\;.
\ee
The above equation can be written as 
\be
|T\bm{y}(t)-T\bm{z}(t)|= \left|\int_{t_0}^t \left(\int_{\bm{z}(s)}^{\bm{y}(s)}\bm{Df}(\bm{r})dr\right)ds\right |\;,
\ee
and therefore we get the inequality
\be
|T\bm{y}(t)-T\bm{z}(t)|\le K_0|(t-t_0)|\,|\bm{y}(t)-\bm{z}(t)|\;.
\ee
Hence there always exists an open interval $(t_0-h, t_0+h)$ (that corresponds to a neighbourhood around $\bm{x}_0$) where $K_0|(t-t_0)|\le1$ and $T$ is a contraction mapping.
\end{proof}

Having established the existence of a contraction mapping in a neighbourhood of the point $\bm{x}_0$ and recalling that $\mathbb{R}^2$ is a complete metric space, we now use the following theorem to establish the existence and uniqueness of the solution of the system (\ref{vec_sys}) subject to the initial condition $\bm{x}(t_0)=\bm{x}_0$.
\begin{thm}
If $T:\mathbb{X}\rightarrow\mathbb{X}$ is a contraction mapping on a complete metric space $\mathbb{X}$, then there is exactly one solution of the equation $T\bm{x}=\bm{x}$.
\end{thm}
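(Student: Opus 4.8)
The statement is the classical Banach contraction-mapping (fixed-point) theorem, so the plan is to construct the fixed point as the limit of the Picard iterates generated by $T$ and then verify uniqueness. First I would fix any starting point $\bm{x}_0 \in \mathbb{X}$ and define the sequence of iterates $\bm{x}_{n+1} = T\bm{x}_n$. Using the contraction estimate established in the preceding Lemma — with a contraction constant $\kappa < 1$, which is obtained by shrinking the neighbourhood so that the bound $K_0|t-t_0| \le 1$ from the Lemma becomes strict — I would show by a trivial induction that $|\bm{x}_{n+1} - \bm{x}_n| \le \kappa^n\,|\bm{x}_1 - \bm{x}_0|$.

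Second, I would prove that $\{\bm{x}_n\}$ is a Cauchy sequence. For any $m > n$, the triangle inequality combined with the geometric bound above gives
\be
|\bm{x}_m - \bm{x}_n| \le |\bm{x}_1 - \bm{x}_0|\sum_{k=n}^{m-1} \kappa^k \le \frac{\kappa^n}{1-\kappa}\,|\bm{x}_1 - \bm{x}_0|\;,
\ee
and since $0 \le \kappa < 1$ the right-hand side tends to zero as $n \to \infty$, uniformly in $m$. Because $\mathbb{X}$ (here $\mathbb{R}^2$ with the Euclidean metric) is a complete metric space, this Cauchy sequence converges to some limit $\bm{x}^* \in \mathbb{X}$.

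Third, I would verify that $\bm{x}^*$ solves $T\bm{x}=\bm{x}$. Every contraction is continuous, since $|T\bm{y} - T\bm{x}^*| \le \kappa |\bm{y} - \bm{x}^*| \to 0$ as $\bm{y}\to\bm{x}^*$; hence passing to the limit in the recurrence $\bm{x}_{n+1} = T\bm{x}_n$ yields $\bm{x}^* = T\bm{x}^*$, so a solution exists. Uniqueness then follows at once: if $\bm{x}^*$ and $\bm{y}^*$ are both fixed points, then $|\bm{x}^* - \bm{y}^*| = |T\bm{x}^* - T\bm{y}^*| \le \kappa |\bm{x}^* - \bm{y}^*|$, and because $\kappa < 1$ this forces $|\bm{x}^* - \bm{y}^*| = 0$, i.e. $\bm{x}^* = \bm{y}^*$.

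I do not anticipate a genuine obstacle, as the argument is entirely standard; the only point demanding care is that the preceding Lemma is phrased with $0 \le \kappa \le 1$, whereas both the summation of the geometric series and the uniqueness step require the \emph{strict} inequality $\kappa < 1$. This is harmless in the present setting, because the Lemma's proof in fact produces an interval $(t_0-h,\,t_0+h)$ on which $K_0|t-t_0|$ can be taken strictly below unity, so one simply applies the theorem on such a neighbourhood.
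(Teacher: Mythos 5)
Your proof is correct, but there is nothing in the paper to compare it against: the paper states this theorem \emph{without proof}, invoking it as the classical Banach contraction-mapping (fixed-point) theorem and immediately applying it to the operator $T$ of the preceding Lemma to obtain the unique local solution \eqref{sol1} of the geodesic system. Your Picard-iteration argument --- iterate $T$ from an arbitrary $\bm{x}_0$, establish the geometric bound $|\bm{x}_{n+1}-\bm{x}_n|\le \kappa^n|\bm{x}_1-\bm{x}_0|$, deduce that the iterates form a Cauchy sequence, use completeness of $\mathbb{X}$ to extract a limit $\bm{x}^*$, pass to the limit in the recurrence via continuity of $T$, and get uniqueness from the strict contraction estimate --- is the standard textbook proof and supplies exactly what the paper leaves to the literature. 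Your closing caveat is also a genuine and worthwhile catch: the paper's Lemma only asserts $0\le\kappa\le 1$, which is not sufficient for this theorem (a map with $\kappa=1$ can have no fixed point, or infinitely many), so one must indeed shrink the interval $(t_0-h,t_0+h)$ so that $K_0|t-t_0|$ is bounded \emph{strictly} below unity before the theorem can be applied; this is a gap in the paper's own chain of reasoning that your proof explicitly repairs.
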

The above theorem establishes a unique solution of the system (\ref{vec_sys}) with the initial condition $\bm{x}(t_0)=\bm{x}_0$ in an $\epsilon$-neighbourhood of the point $\bm{x}_0$ which is given by
\be
\bm{x}(t)=\bm{x}_0+\int_{t_0}^t \bm{f}(\bm{x}(s))ds\;.
\label{sol1}
\ee
The assumption that $\bm{f}:U\rightarrow\mathbb{R}^2$ is of class $C^1$ assures the solution to be continuous and differentiable in this neighbourhood. 
Let us now find the nature of the fixed point $r = v = 0$ of the system (\ref{vec_sys}). As the partial derivatives of the functions $M$ and $N$ exist and are continuous in the neighbourhood of the fixed point, we can linearise the system near the fixed point and hence the general behaviour of this system  near the singular point is similar to the characteristic equations  \cite{Tricomi_1961}
 \begin{eqnarray}\label{linearsystem}
 \nonumber   \frac{dv}{dt} &=& Av + Br \\
              \frac{dr}{dt} &=& Cv + Dr,
\end{eqnarray}
where $A = \dot{M}(0,0)$, $B = M'(0,0)$, $C = \dot{N}(0,0)$, $N'(0,0) = D$, with the dot denoting partial differentiation with respect to the variable $v$ while the dash denotes partial differentiation with respect to the coordinate $r$ and $AD - BC \neq 0$. By using a linear substitution of the type
\begin{eqnarray}
\nonumber  \xi &=& \alpha v + \omega r\\
 \label{linearsub} \eta &=& \gamma v+ \delta r,
\end{eqnarray}
where $\alpha\delta - \omega\gamma \neq 0,$ and the equation
\begin{equation}\label{eqnsub}
    \frac{d\eta}{d\xi} = \frac{\chi_2\eta}{\chi_1\xi},
\end{equation}
the system \eqref{linearsystem} can be reduced into the form
\begin{eqnarray}
 \nonumber \frac{d\xi}{dt} &=& \chi_1 \xi\\
 \label{reducedeqn} \frac{d\eta}{dt} &=& \chi_2\eta.
\end{eqnarray}
Using equations \eqref{linearsystem}, \eqref{linearsub} and \eqref{reducedeqn}, it can be found that
\begin{eqnarray*}
  \alpha(Av + Br) + \omega(Cv + Dr) &=& \chi_1(\alpha v + \omega r)\\
  \gamma(Av + Br) + \delta(Cv + Dr) &=& \chi_2(\gamma v + \delta r).
\end{eqnarray*}
By equating the coefficients of $v$ and $r$ in the above equations, we obtain
\begin{eqnarray}
  \nonumber (A - \chi_1)\alpha + C\omega &=& 0 \\
  B\alpha + (D - \chi_1)\omega &=& 0
\end{eqnarray}
and
\begin{eqnarray}
 \nonumber (A - \chi_2)\gamma + C\delta &=& 0 \\
  B\gamma + (D - \chi_2)\delta &=& 0.
\end{eqnarray}
The above equations in $\alpha, \; \omega$ and $\gamma, \; \delta$ may be satisfied by the values of  $\alpha, \; \omega, \; \gamma, \; \delta$ not all zero if the determinant of the coefficients is zero. That is
\begin{equation}\label{determinant}
  \left| \begin{array}{cc}
            A - \chi & C \\
            B & D - \chi \\
          \end{array}
  \right| = 0,
\end{equation}
or
\begin{equation}\label{characteristiceqn}
    \chi^2 - (A + D)\chi + AD - BC = 0.
\end{equation}
This is the characteristic equation with roots (eigenvalues) $\chi_1$ and $\chi_2$ given by
\begin{equation}\label{characteristicsoln}
    \chi = \frac{1}{2}\left((A + D) \pm \sqrt{(A - D)^2 + 4BC}\right).
\end{equation}
The singularity of Equation \eqref{linearsystem} is classified as a node if $(A - D)^2 + 4BC \geq 0$ and $BC > 0$. Otherwise, it may be a  centre or focus.

Now, for the equation  \eqref{nullgeodesics} we have  $M(v,r) = 2r$, $ N(v,r) = r -2m(v,r)$.
If at the central singularity, $v = 0$, $r = 0$, we define the following limits
\begin{subequations}
\begin{eqnarray}
m_0& =& \lim\limits_{v\to 0, r \to 0}m(v,r),\\ \dot{m}_0 &=& \lim\limits_{v\to 0, r \to 0}\frac{\partial}{\partial v}m(v,r), \\m'_0 &= &\lim\limits_{v\to 0, r \to 0}\frac{\partial}{\partial r}m(v,r),
\end{eqnarray}
\end{subequations}
then the null geodesic equation can be linearised near the central singularity  as
\begin{equation}\label{abridgednew2}
    \frac{dv}{dr} = \frac{2r}{(1 - 2m'_0)r - 2\dot{m}_0v}.
\end{equation}
Clearly, this equation has a singularity ar $v = 0, \quad r = 0$. We can determine the nature of this singularity by observing the value of the discriminant of the characteristic equation. Using equation \eqref{characteristicsoln}, the roots of the characteristic equation are given by
\begin{equation}\label{newcharacteristicsoln}
    \chi = \frac{1}{2}\left((1 - 2m'_0)\pm \sqrt{(1 - 2m'_0)^2 - 16\dot{m}_0}\right).
\end{equation}
For the singular point at $r = 0, v = 0$ to be a node, it is required that 
\begin{equation}\label{Conditions}
(1 - 2m'_0)^2 - 16\dot{m}_0 \geq 0 \quad \text{ and} \quad  \dot{m}_0 > 0.
 \end{equation}
Thus, if the mass function $m(v,r)$ is chosen such that the condition in Equation \eqref{Conditions} is satisfied, then
the singularity at the origin $(v = 0, r = 0)$ will be a node and outgoing non-spacelike geodesics can come out of the singularity with a definite value of the tangent. 

\subsection{Existence of outgoing non-spacelike geodesics}

Let us now return to the physical problem of the collapsing generalised Vaidya spacetime and let us choose the mass function that has the following properties:
\begin{enumerate}
\item The mass function $m(v,r)$ obeys all the physically reasonable energy conditions throughout the spacetime.
\item The partial derivatives of the mass function exist and are continuous on the entire spacetime.
\item The limits of the partial derivatives of the mass function $m(v,r)$ at the central singularity obey the conditions: $(1 - 2m'_0)^2 - 16\dot{m}_0 \geq 0$ and $\dot{m}_0 > 0$.
\end{enumerate}
The choice of the mass function with the above properties would ensure the existence and uniqueness of the solutions of the null geodesic equation in the vicinity of the central singlarity and will also make the central singularity a node of  $C^1$ solutions with definite tangents.

To find the condition for the existence of outgoing radial non-spacelike geodesics from the nodal singularity, we consider the tangent of these curves at the singularity. Suppose $X$ denotes the tangent to the radial null geodesic. If the limiting value of $X$ at the singular point is positive and finite then we can see that outgoing future directed null geodesics do terminate in the past at the central singularity. 
The existence of these radial null geodesics  characterises the nature (a naked singularity or a black hole) of the collapsing solutions. In order to determine the nature of the limiting value of $X$ at $r = 0, v = 0$ we define
\begin{equation}\label{limitingvalue}
   X_0 = \lim_{\substack{v\rightarrow 0, r\rightarrow 0}}X = \lim_{\substack{v\rightarrow 0, r\rightarrow 0}}\frac{v}{r}.
\end{equation}
Using Equation \eqref{abridgednew2} and L'Hospital's rule (for the $C^1$ null geodesics) we get
\begin{equation}\label{limitingvalueX0}
   X_0 = \lim_{\substack{v\rightarrow 0, r\rightarrow 0}}\frac{v}{r} = \frac{dv}{dr} = \frac{2}{\left(1-2m'_0\right) - 2\dot{m}_0(\frac{v}{r})} ,
\end{equation}
which simplifies to
\begin{equation}\label{limitingvalueX02}
   X_0 = \frac{2}{(1-2m'_0) - 2\dot{m}_0X_0} .
\end{equation}
Solving for $X_0$ gives
\begin{equation}\label{X_0 value}
    X_0 = b_{\pm} = \frac{(1-2m'_0)\pm\sqrt{(1-2m'_0)^2 - 16\dot{m}_0}}{4\dot{m}_0}.
\end{equation}
If we can get one or more positive real roots by solving equation \eqref{limitingvalueX02}, then the singularity may be locally naked if the null geodesic lies outside the trapped region. In the next subsection we will calculate the dynamics of the trapped region to find the conditions for the existence of such geodesics.  

 \subsection{Apparent horizon}

The occurrence of naked singularity or black hole is usually decided by casual behaviour of the trapped surfaces developing in the spacetime during the collapse evolution.
The apparent horizon is the boundary of trapped surface region in the spacetime. For the generalised Vaidya spacetime the equation of the apparent horizon is given as
\begin{equation} \label{apparent:eqn}
\frac{2m(v,r)}{r} = 1.
\end{equation}

Thus, the slope of the apparent horizon can be calculated in the following way: we know 
\begin{subequations}
\begin{equation} \label{slope1:apparent}
\frac{2dm(v,r)}{dr} = 1,
\end{equation}
\begin{equation}\label{slope2:apparent}
2\left(\frac{\partial m}{\partial v}\right)\left(\frac{dv}{dr}\right)_{AH} + \frac{2\partial m}{\partial r} = 1,
\end{equation}
\end{subequations}
which finally gives the slope of the apparent horizon at the central singularity ($v \to 0, r \to 0$) as 
\begin{equation}\label{slopefianl:apparent}
\left(\frac{dv}{dr}\right)_{AH} = \frac{1 - 2m'_0}{2\dot{m}_0}.
\end{equation}
Thus now we have the sufficient conditions for the existence of a locally naked central singularity for a collapsing generalised Vaidya spacetime, which we state in the following proposition:
\begin{prop}
Consider a collapsing generalised Vaidya spacetime from a regular epoch, with a mass function $m(v,r)$ that obeys all the physically reasonable energy conditions and is differentiable in the entire spacetime. If the following conditions are satisfied :
\begin{enumerate}
\item The limits of the partial derivatives of the mass function $m(v,r)$ at the central singularity obey the conditions: $(1 - 2m'_0)^2 - 16\dot{m}_0 \geq 0$ and $\dot{m}_0 > 0$,
\item There exist one or more positive real roots $X_0$ of the equation (\ref{X_0 value}),
\item At least one of the positive real roots is less than $\left(\frac{dv}{dr}\right)_{AH}$ at the central singularity,
\end{enumerate}
then the central singularity is locally naked with outgoing $C^1$ radial null geodesics escaping to the future.
\end{prop}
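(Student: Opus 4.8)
The plan is to assemble the three hypotheses into the geometric picture of an outgoing null geodesic that emanates from the central singularity in the past and immediately enters the untrapped region, which is precisely what local nakedness means. First I would invoke the existence-and-uniqueness apparatus already built up in this section: since the mass function is $C^1$ throughout, the vector field $\bm{f}$ in \eqref{vec_sys} is of class $C^1$, so Lemma~1 supplies a contraction mapping $T$ on a small ball $B_\epsilon(\bm{x}_0)$ about any point arbitrarily near the fixed point $r=v=0$, and Theorem~1 then yields a unique $C^1$ integral curve \eqref{sol1} through that point. Thus the radial null geodesics of \eqref{nullgeodesics} are genuine, well-defined $C^1$ curves in a punctured neighbourhood of the singularity, and the whole question reduces to their limiting behaviour as $r,v\to 0$.

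Next I would use hypothesis (1) to fix the local phase portrait. Linearising \eqref{nullgeodesics} gives \eqref{abridgednew2}, whose characteristic roots are \eqref{newcharacteristicsoln}; the discriminant condition $(1-2m'_0)^2-16\dot m_0\ge 0$ makes both roots real, while $\dot m_0>0$ makes their product $AD-BC=4\dot m_0$ positive, so the two eigenvalues share a sign and the fixed point is a node rather than a saddle, centre or focus. Being a node, every integral curve reaching the origin does so along a definite tangent $X_0=\lim_{r\to0}v/r$, the admissible values being the solutions of \eqref{limitingvalueX02}, i.e. the roots $b_{\pm}$ in \eqref{X_0 value}. Hypothesis (2) guarantees that at least one such root is positive and real. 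A finite positive $X_0$ means there is an integral curve on which $v\sim X_0 r>0$ as $r\to0$: an outgoing, future-directed radial null geodesic whose past endpoint is the singularity. Hence geodesics do terminate at $(v,r)=(0,0)$ in the past.

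The decisive step, and the one I expect to be the main obstacle, is to show that hypothesis (3) forces such a geodesic to leave the singularity through the untrapped region rather than the trapped one, so that it actually escapes. For this I would compare the geodesic ray $v\simeq X_0 r$ with the apparent horizon \eqref{apparent:eqn} near the origin. Starting from a regular epoch we have $m_0=0$ (consistent with the very linearisation \eqref{abridgednew2} already used), so along the ray $2m(v,r)-r\simeq \big(2\dot m_0 X_0+2m'_0-1\big)\,r$; the point is untrapped, $2m<r$, precisely when $2\dot m_0 X_0+2m'_0-1<0$, that is
\[
X_0<\frac{1-2m'_0}{2\dot m_0}=\left(\frac{dv}{dr}\right)_{AH},
\]
which is exactly condition (3) read off from \eqref{slopefianl:apparent}. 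Thus a positive root satisfying (3) corresponds to a null geodesic lying, for small $r$, strictly outside the apparent horizon, so it is not swallowed by the forming trapped region.

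Finally I would collect the conclusions: the geodesic exists and is unique and $C^1$ (Step 1), it emanates from the singularity in the past with a definite positive tangent (hypotheses 1--2), and it propagates outward in the untrapped region (hypothesis 3). Together these exhibit at least one outgoing future-directed $C^1$ radial null geodesic terminating in the past at $(v,r)=(0,0)$ and escaping into the surrounding untrapped spacetime, which establishes that the central singularity is locally naked. The only real care needed is in the comparison of the last step --- in particular checking the sign of $1-2m'_0$ so that the chosen positive root (typically the smaller one, $b_-$) indeed falls below the apparent-horizon slope --- together with the observation that only \emph{local} nakedness is claimed, since escape all the way to future null infinity would require a further global argument.
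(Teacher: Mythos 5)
Your proposal is correct and follows essentially the same route as the paper: the contraction-mapping existence/uniqueness apparatus, the linearisation and node classification under hypothesis (1), the tangent roots $b_{\pm}$ of \eqref{limitingvalueX02} under hypothesis (2), and the comparison with the apparent-horizon slope \eqref{slopefianl:apparent} under hypothesis (3). Your explicit expansion of $2m(v,r)-r$ along the ray $v\simeq X_0 r$ (using $m_0=0$ at a regular epoch) simply spells out the step the paper leaves implicit when it asserts that a root below $\left(\frac{dv}{dr}\right)_{AH}$ places the escaping geodesic outside the trapped region.
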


\begin{figure}[!h]
\centering
\includegraphics[scale=0.45]{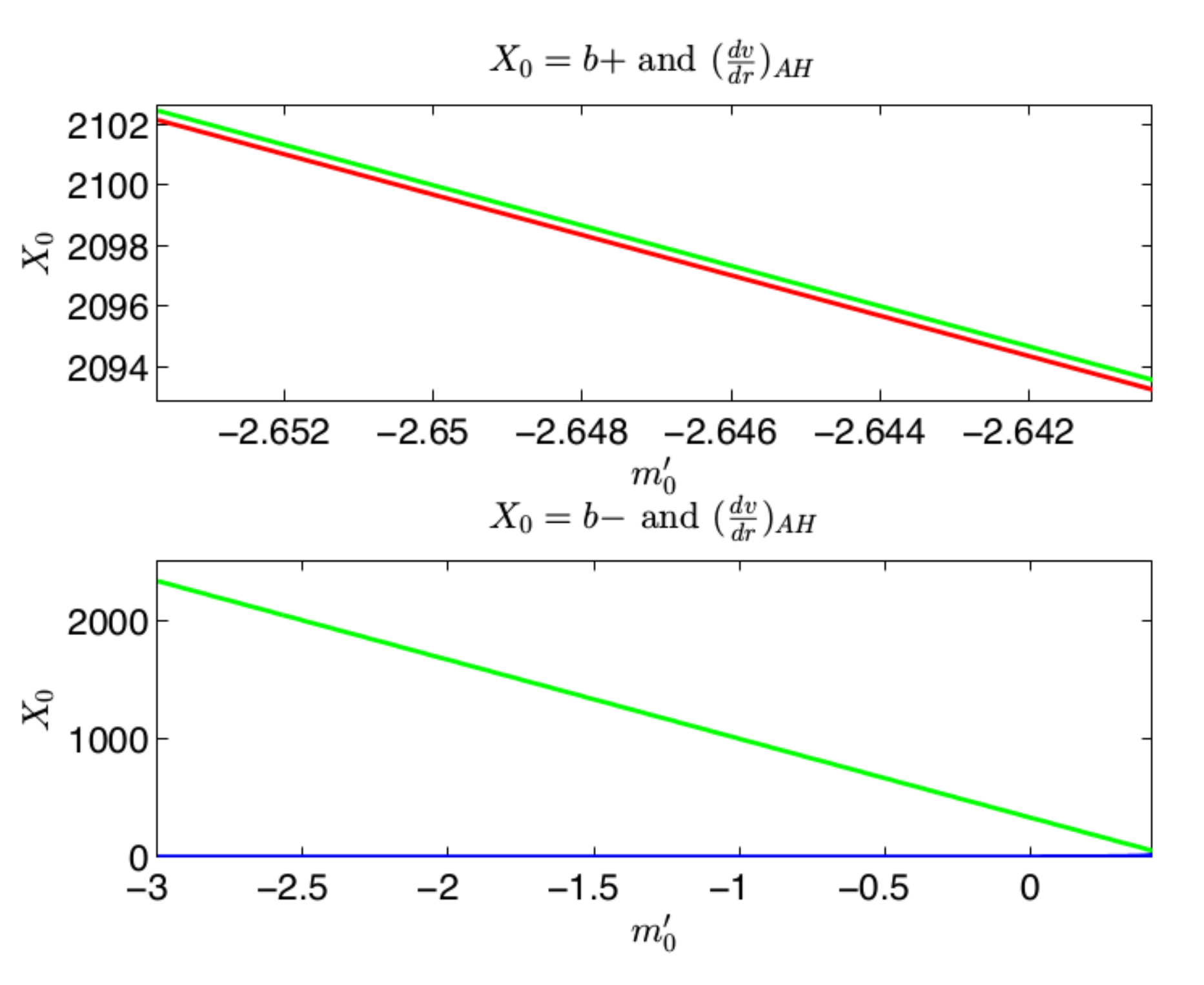}
  \caption{Variation of $X_0$ (red and blue) and $\left(\frac{dv}{dr}\right)_{AH}$ (green) with $m'_0$ at fixed value $\dot{m}= 0.0015$}\label{variation1}
\end{figure}
Figure \ref{variation1} shows the values of $X_0$ and $\left(\frac{dv}{dr}\right)_{AH}$ when $m'_0$ is varied in the interval $-3.0 \leq m'_0 \leq 0.42$ for a fixed value of $\dot{m}_0$. It can be observed from the figure that the value of $X_0 = b_{\pm}$  is always below the value of $\left(\frac{dv}{dr}\right)_{AH}$, and thus there exist open sets of parameter values for which the singularity is locally naked.
\section{Strength of Singularity}
To compute the strength of singularity according to Tipler \cite{Tipler}, which is the measure of its destructive capacity in the sense that whether extension of spacetime is possible through them or not \cite{Ghosh_2001}, we consider the null geodesics parameterized by the affine parameter $k$ and terminating at the shell focusing singularity $r = v = k = 0$. Following Clarke and Krolack \cite{Clarke}, a singularity would be strong if the condition
\begin{equation}\label{curvature}
  \lim_{\substack{k \rightarrow 0}}k^2\psi = \lim_{\substack{k \rightarrow 0}}k^2R_{\mu\nu}K^{\mu}K^{\nu} > 0,
\end{equation}
as defined by Tipler \cite{Tipler} (which is the sufficient condition for the singularity to be Tipler strong) where $R_{\mu\nu}$ is the Ricci tensor, is satisfied. We find the scalar $\psi = R_{\mu\nu}K^{\mu}K^{\nu}$ using equations \eqref{Ricci} \eqref{kvcomponent} and \eqref{rcomponent} as
\begin{equation}\label{Scalarcurvature}
   \psi = (2\dot{m}_0)\left(\frac{P}{r^2}\right)^2,
\end{equation}
and therefore,
\begin{equation}\label{ksquare}
    k^2\psi = (2\dot{m}_0)\left(\frac{Pk}{r^2}\right)^2.
\end{equation}
Using equations \eqref{kvcomponent}, \eqref{rcomponent} and L'Hospital's rule, we can evaluate the limit along non-spacelike geodesics as $k\rightarrow 0$. This limit is found to be
\begin{equation}\label{nonamegiven}
    \lim_{\substack{k \rightarrow 0}}k^2\psi = (2\dot{m}_0)\lim_{\substack{k \rightarrow 0}}\left(\frac{Pk}{r^2}\right)^2\;.
\end{equation}
 If we assume that $P \neq 0, \; \infty$, then by using L'Hospital's rule we have
\begin{equation}
    \lim_{\substack{k \rightarrow 0}}\left(\frac{Pk}{r^2}\right) = \lim_{\substack{k \rightarrow 0}}\left(\frac{Pdk}{2rdr}\right). 
\end{equation}
From equation \eqref{kvcomponent}, $\frac{P}{r} = \frac{dv}{dk}$. Therefore
\begin{equation}\label{limitscont.}
    \lim_{\substack{k \rightarrow 0}}\left(\frac{Pk}{r^2}\right) = \frac{1}{2}\frac{dv}{dk}\frac{dk}{dr} = \frac{1}{2}\frac{dv}{dr} =\frac{1}{2}X_0.
\end{equation}
Thus, we finally get
\begin{equation}\label{limitfinal}
    \lim\limits_{k\to 0}k^2\psi = \frac{1}{4}X_0^2(2\dot{m}_0).
\end{equation}
We observe that  the strength of the central singularity depends only on the limit of the derivative of mass function with respect to $v$ and the limiting value $X_0$.

With the suitable choice of the mass function (see Table \ref{Table_X_0} for some special cases), it can be shown that
\begin{equation}\label{canbeshown}
     \lim\limits_{k\to 0}k^2\psi  = \frac{1}{4}X_0^2(2\dot{m}_0) > 0.
\end{equation}
If this condition is satisfied for some real and positive root $X_0$, then we conclude that the observed naked singularity is strong. It is interesting to note that when the energy conditions are satisfied, then if a naked singularity is developed as a end state of the collapse, then that naked singularity is always strong.

\begin{table*}
\caption{Equations of tangents $X_0$ to the singularity curve and values of $\lim\limits_{k\to 0}k^2\psi$ for some special sub-classes of generalised Vaidya spacetime}\label{Table_X_0}
\begin{ruledtabular}
\begin{tabular}{lll}
    Spacetime & Equation for tangent to the singularity curve $X_0$ & $\lim\limits_{k\to 0}k^2\psi$ \\
    \hline
    Vaidya & $X_0 = \frac{2}{1-\lambda X_0}$ or $ X_0 = \frac{1\pm\sqrt{1 - 8\lambda}}{2\lambda},\quad 0<\lambda \leq \frac{1}{8} $ & $ \frac{1}{4}\lambda X_{0}^{2}$\\
    Charged Vaidya &$\mu^2X_0^3 - 2\lambda X_0^2 + X_0 -2 = 0$ & $\frac{1}{2} X_{0}^{2}\left(\lambda - \mu^2X_0\right)$\\
    Charged Vaidya-de Sitter & $\mu^2X_0^3 - 2\lambda X_0^2 + X_0 - 2 = 0$ & $\frac{1}{2} X_{0}^{2}\left(\lambda - \mu^2X_0\right)$\\ 
    Husain solution & $2\mu^2\left(1 - \frac{2k}{2k-1}\right)X_0^{2k+1} + \lambda X_0^2 - 2X_0 + 4 = 0$ & $\frac{1}{4}X_{0}^{2}\left(\lambda -\frac{4k\mu^2}{2k-1}X_0^{2k-1}\right)$\\
\end{tabular}
\end{ruledtabular}
\end{table*} 

\section{Some special sub-classes of Generalised Vaidya}
Using equation \eqref{limitingvalueX02}  we calculate the equations of tangents to the null geodesics at the central singularity for some special sub-classes of the generalised Vaidya spacetime with the specific mass function, $m(v,r)$. In all these mass functions, we can see that it is possible to obtain at least one or more real and positive value of $X_0$.
\begin{enumerate}[i.]
\item {\it The self-similar Vaidya spacetime}

In this case we consider the situation of a radial influx of null fluid in an initially empty region of Minkowski spacetime \cite{Joshi_1993,Naresh_2001}. 
The first shell arrives at $r = 0$ at time $v = 0$ and the final shell at $v=T$. 
A central singularity of the collapsing mass is developed at $r = 0$. For $v<0$ we have $m(v,r) = 0$ and for $v>T$ we have $m(v,r) = M_0$ where $M_0$ is the constant Schwarzschild mass. For the weak energy conditions to be satisfied, it is required that $\dot{m}(v,r)$ to be a non-negative. We define the mass function as
\begin{equation}\label{Vaidyasoln}
  m(v,r) = m(v),
\end{equation}
where
\begin{equation}
m(v) =
\begin{cases}
 0,& v < 0, \\
\frac{1}{2}\lambda v, & 0\leq v \leq T,\\
M_0, & v > T.
\end{cases}
\end{equation}
The mass function is a non-negative increasing function of $v$ for imploding radiation. For $ 0\leq v \leq T$, the solution is the self-similar Vaidya spacetime. For this choice of mass function, using equation \eqref{limitingvalueX02}, we get
\begin{equation}\label{X_0vaidya}
   X_0 = \frac{2}{1-\lambda X_0}\quad \text{or}\quad  X_0 = \frac{1\pm\sqrt{1 - 8\lambda}}{2\lambda}.
\end{equation}
This is similar to the solution obtained by Joshi \cite{Joshi_1993}. This equation gives positive values of $X_0$ for all values of $\lambda$ in the range $0<\lambda \leq \frac{1}{8}$. It can also be observed that $\lim\limits_{k\to 0}k^2\psi = \frac{1}{4}\lambda X_{0}^{2} > 0$ for all positive values of $X_0$; hence the singularity is strong.

\item {\it The charged Vaidya spacetime}  

This subclass of the generalised Vaidya spacetime has been studied in great detail in \cite{Lindquist_1965, Israel_1967, Patil_199}.   We consider here the form of the  mass function in  \cite{Wang, Naresh_2001}
\begin{subequations}\label{chargedvaidya}
\begin{equation}
    m(v,r) = f(v) - \frac{e^2(v)}{2r},
\end{equation}
where $f(v)$ and $e(v)$ are arbitrary functions representing the mass and electric charge respectively (limited only by the energy conditions), at the advanced time $v$. Particularly, we define these functions as \cite{Beesham}
\begin{equation}
    f(v) =
\begin{cases}
0, & v<0,\\
\lambda v (\lambda >0) & 0 \leq v\leq T,\\ f_0(>0), & v>T,
\end{cases}
\end{equation}
and
\begin{equation}
     e^2(v) =
\begin{cases}
 0, & v<0,\\ \mu^2v^2(\mu^2>0), & 0 \leq v\leq T,\\ e^2_0(>0), & v>T.
\end{cases}
\end{equation}
\end{subequations}
For this choice of mass function, using equation \eqref{limitingvalueX02} we obtain
\begin{equation}\label{X_0chargedvaidya}
    \mu^2X_0^3 - 2\lambda X_0^2 + X_0 -2 = 0.
\end{equation}
This equation is a polynomial of degree three with the negative last term and positive first coefficient. By the theory of polynomial functions, every equation of this nature must have at least one root which is positive. The existence of these roots signifies that the singularity is naked. In particular, when $\mu^2 = 0.001, \; \lambda = 0.01$, then one of the roots of equation \eqref{X_0chargedvaidya} is $2.077$ and $ \lim\limits_{k\to 0}k^2\psi = \frac{1}{2} X_{0}^{2}\left(\lambda - \mu^2X_0\right) = 0.0171 > 0$. Therefore the condition for a strong naked singularity is satisfied.

\item {\it The charged Vaidya-deSitter spacetime}

The charged Vaidya-deSitter solution is a generalised Vaidya solution of a charged null fluid in an expanding de-Sitter background \cite{Beesham}.
We define the mass mass function as
\begin{equation}\label{Antidesitter}
    m(v,r) = m(v) - \frac{e^2(v)}{2r} + \frac{\Lambda r^3}{6},
\end{equation}
where $f(v)$ and $e(v)$ are arbitrary functions representing the mass and electric charge respectively, and
$\Lambda \neq 0$ is the cosmological constant. For the weak energy condition to be satisfied, it is required that $r\dot{m}(v) - e(v)\dot{e}(v)$ to be non-negative \cite{Wang,Beesham}. We specifically define the functions similar to that of charged Vaidya and the algebraic equation that governs the behaviour of the tangent vectors near the central singularity comes out to be the same.

 \item {\it The Husain solution}

This is a solution of the Einstein field equations for the null fluid with the equation of state $\varrho = k\rho$ where $\rho = \frac{g(v)}{4\pi r^{2k+2}},\: k \neq \frac{1}{2}$ \cite{Husian_1996, Wang}. This solution is a subclass to the generalised Vaidya solutions with the mass function given by
\begin{subequations}
\begin{equation}\label{Husinasoln}
    m(v,r) =
    \begin{cases}
    q(v) - \frac{g(v)}{(2k - 1)r^{2k-1}}, & k\neq \frac{1}{2},\\
    q(v) + g(v)\ln r, & k = \frac{1}{2},
    \end{cases}
\end{equation}
where $q(v)$ and $g(v)$ are arbitrary functions which are restricted only by the energy conditions.
For the dominant energy conditions to be satisfied, it is required that $g(v)\geq 0$ and either $\dot{g}(v) > 0$ for $k < \frac{1}{2}$ or $\dot{g}(v) < 0$ for $k > \frac{1}{2}$. The weak or strong energy conditions  are satisfied when $\rho \geq 0,\: \varrho \geq 0$. We consider the case when $k \neq \frac{1}{2}$ and define the mass function as
\begin{equation}
    q(v) =
\begin{cases}
0, & v<0,\\ \frac{1}{2}\lambda v (\lambda >0), & 0 \leq v\leq T,\\ q_0(>0), & v>T,
\end{cases}
\end{equation} and
\begin{equation}
g(v) =
\begin{cases}
0, & v<0,\\ \mu^2v^{2k}, & 0 \leq v\leq T,\\ g_0(>0), & v>T.
\end{cases}
\end{equation}
\end{subequations}
For this mass function using equation \eqref{limitingvalueX02}, we get
\begin{equation}\label{X_0Husian}
    2\mu^2\left(1 - \frac{2k}{2k-1}\right)X_0^{2k+1} + \lambda X_0^2 - 2X_0 + 4 = 0.
\end{equation}
This equation can be solved to get some positive roots $X_0$ for some particular values of $\mu^2, \; k$ and $\lambda$. In particular, when $\mu^2 = 0.001, \; k = \lambda = 0.01$, then one of the roots of equation \eqref{X_0Husian} is $2.00408$ and $\lim\limits_{k\to 0}k^2\psi = \frac{1}{4}X_{0}^{2}\left(\lambda -\frac{4k\mu^2}{2k-1}X_0^{2k-1}\right)= 0.506 > 0$. This shows that the singularity is naked and strong.
\end{enumerate}
Table \ref{Table_X_0} gives a summary of the equations of tangent to the singularity curve $X_0$ and the value of $\lim\limits_{k\to 0}k^2\psi $ for chosen mass functions in some sub-classes of the generalised Vaidya spacetime.

\section{Concluding Remarks\label{conclusion}}
In this paper we developed a general mathematical formalism to study the gravitational collapse of the generalised Vaidya spacetime in the context of the Cosmic Censorship Conjecture. We studied the structure of the central singularity to show that it can be a node with outgoing radial null geodesics emerging from the singular point with definite value of the tangent, 
depending on the nature of the generalised Vaidya mass function and the parameters in the problem. The key points that emerged transparently from this analysis are as follows:
\begin{itemize}
\item It is quite clear that given any realistic mass function, there always exists an open set in the parameter space for which the central singularity is naked and CCC is violated. A similar result is well known for pure Type I matter fields \cite{Joshibook2, goswami}. Hence we can conclude that the occurrence of naked singularity is quite a "stable" phenomenon even when the nature of matter field changes by combining a radiation-like field along with a collapsing perfect fluid.
\item It is also evident that for an open set in the parameter space, these naked central singularities are strong and they cannot be regularised anyway by extension of spacetime through them. This has far reaching consequences as their presence will no longer make the global spacetime future asymptotically simple, and the proofs of black hole dynamics and thermodynamics have to be reformulated.
\item Finally the generalised Vaidya spacetime is a more realistic spacetime than pure dust-like matter or perfect fluid, during the later stages of gravitational collapse of a massive star. A collapsing star should always radiate and hence there should be a combination of light-like matter along with a perfect fluid. Therefore a violation of censorship in these models should have novel astrophysical signatures which are yet to be properly deciphered.

\end {itemize}

\section {Acknowledgement \label{acknowledge}}
We are indebted to the National Research Foundation and the University of KwaZulu-Natal for financial support.
SDM acknowledges that this work is based upon research supported by the South African Research Chair Initiative of the
Department of Science and Technology. MDM extends his appreciation to the University of Dodoma in Tanzania for study leave.

\thebibliography{}
\bibitem{Vaidya} P. C. Vaidya, Proc. Indian Acad. Sci. A \textbf{33}, 264 (1951).
\bibitem{Papa} A. Papapetrou, in {\it A random walk in Relativity and Cosmology}, Wiley Eastern, New Delhi (1985).
\bibitem{Dwivedi_1989} H. Dwivedi and P. S. Joshi, Class. Quantum Grav. \textbf{6} 1599-1606 (1989).
\bibitem{Joshi_1993} P.S. Joshi, \emph{Global Aspects in Gravitation and Cosmology}, Clarendon Press, Oxford (1993).
\bibitem{Wang} A. Wang and Y. Wu, Gen. Relativ. Gravit. \textbf{31} 107 (1999).
\bibitem{sean} S. A. Hayward, Phys. Rev. Lett.  {\bf 96}, 031103 (2006).
\bibitem{Ghosh_2004} A. K. Dawood and S.G. Ghosh, Phys. Rev. D {\bf 70}, 104010 (2004).
\bibitem{frolov} V. Frolov, arXiv: 1402.5446 [hep-th] (2014).
\bibitem{Maharaj} S. D. Maharaj, G. Govender and M. Govender, Gen. Relativ. Gravit. \textbf{44}, 1089 (2002).
\bibitem{Reza} M. Alishahiha, A. F. Astaneh and M. R. M. Mozaffar, arXiv:1401.2807 [hep-th] (2014).
\bibitem{Sungwook} E. H. Sungwook, D. Hwang, E. D Stewart and D. Yeom, Class. Quant. Grav. {\bf 27} 045014, (2010). 
\bibitem{israel} C. Barrabes and W. Israel, Phys. Rev. D {\bf 43}, 1129 (1991).
\bibitem{Lake} K. Lake and T. Zannias, Phys. Rev. D \textbf{43}, 1798 (1990).
\bibitem{Husian_1996} V. Husain, Phys. Rev. D \textbf{53}, R1759 (1996).
\bibitem{Ellis_1973} S. W. Hawking and G. F. R. Ellis \emph{The Large Scale Structure of Spacetime}, Cambridge University Press, Cambridge (1973).
\bibitem{Ghosh_2001} S. G. Ghosh and N. Dadhich, Phy. Rev. D \textbf{64}, 047501 (2001).
\bibitem{Tricomi_1961} F. G. Tricomi, \emph{Differential Equations}, Blackie \& Son Ltd, London (1961).
\bibitem{Perko_1991} L. Perko \emph{Differential Equations and Dynamical Systems}, Springer-Verlag, New York (1991).
\bibitem{Clarke} C. J. S Clarke and A. Krolak,  J. Geom. Phy. \textbf{12} 127 (1985).
\bibitem{Tipler} F. J. Tipler, Phy. Lett. A \textbf{64}, 8 (1977).
\bibitem{Lindquist_1965} R. W. Lindquist, R.  A. Schwartz and C. W. Misner,  Phys. Rev. \textbf{137B}, 1364-1368 (1965).
\bibitem{Israel_1967} W. Israel, Phys. Lett. A \textbf{24}, 184-186 (1967).
\bibitem{Patil_199} K. D. Patil, R. V. Saraykar and S H Ghate, Pramana J. Phys. {\bf 52}, 553-559 (1999).
\bibitem{Naresh_2001} Naresh Dadhich, S. G. Ghosh, Phy. Lett {B}, {\bf 518}, 1 (2001).
\bibitem{Beesham} A. Beesham, S.G. Ghosh, Int. J. Mod. Phys. D \textbf{12}, 801 (2003).
\bibitem{Joshibook2} 
 P. S. Joshi, \emph{ Gravitational Collapse and Spacetime Singularities}, Cambridge University Press, (2007).
\bibitem{goswami} 
  R.~Goswami and P.~S. Joshi,
  Phys.\ Rev.\ D {\bf 76}, 084026 (2007).

\end{document}